\newcommand{\half}{\tfrac{1}{2}}
\newcommand{\fg}{\mathfrak{g}}
\newcommand{\fk}{\mathfrak{k}}
\newcommand{\fM}{\mathfrak{M}}
\newcommand{\fS}{\mathfrak{S}}
\newcommand{\fso}{\mathfrak{so}}
\newcommand{\Cl}{\mathrm{C}\ell}
\newcommand{\Spin}{\mathrm{Spin}}
\newcommand{\Sp}{\mathrm{Sp}}
\renewcommand{\Sp}{\mathrm{Sp}}
\newcommand{\USp}{\mathrm{USp}}
\newcommand{\SL}{\mathrm{SL}}
\newcommand{\RR}{\mathbb{R}}
\newcommand{\CC}{\mathbb{C}}
\newcommand{\HH}{\mathbb{H}}
\newcommand{\ZZ}{\mathbb{Z}}
\newcommand{\eD}{\mathcal{D}}
\newcommand{\eS}{\mathcal{S}}
\newcommand{\tV}{\mathbb{V}}
\newcommand{\rC}{\mathcal{C}}
\newcommand{\rL}{\mathscr{L}}
\newcommand{\rS}{\mathscr{S}}
\DeclareMathOperator{\End}{End}
\DeclareMathOperator{\im}{im}
\newcommand{\1}{\mathbbm{1}}
\theoremstyle{plain}
\newtheorem{lemma}{Lemma}
\newtheorem{proposition}[lemma]{Proposition}
\definecolor{orange}{rgb}{0.9,0.45,0}
\newcommand{\MUNCH}[1]{\relax}
\begin{document}
\title[Homogeneity of supergravity backgrounds II]{The homogeneity
  theorem for supergravity backgrounds II: the six-dimensional theories}
\author[Figueroa-O'Farrill]{José Figueroa-O'Farrill}
\author[Hustler]{Noel Hustler}
\address{Maxwell and Tait Institutes, School of Mathematics, University of Edinburgh}
\thanks{EMPG-13-25}
\begin{abstract}
  We prove that supersymmetry backgrounds of (1,0) and (2,0)
  six-dimensional supergravity theories preserving more than one half
  of the supersymmetry are locally homogeneous.  As a byproduct we
  also establish that the Killing spinors of such a background
  generate a Lie superalgebra.
\end{abstract}
\maketitle
\tableofcontents

\section{Introduction}

The study of supergravity backgrounds, which had seen its first period
of intense activity in the 1980s in the context of Kaluza--Klein
supergravity (see \cite{DNP} for a then timely review), was retaken in
earnest in the mid-to-late 1990s ushered in by the
dualities-and-branes paradigm in string theory and by the
gauge/gravity correspondence.  Although as a result of this continuing
second effort a huge number of backgrounds are now known, it is fair
to say that we know very little about all but a few small corners of
the landscape of supergravity backgrounds.  This is perhaps not
surprising given our still incomplete knowledge about solutions to the
much more venerable four-dimensional Einstein--Maxwell equations.

The emphasis during the 1980s was on Freund--Rubin backgrounds, in
which the geometry (if not necessarily the fluxes) decompose into the
metric product of a four-dimensional spacetime and some internal
(typically compact) manifold, but the interest nowadays has widened to
backgrounds with an intricate and truly higher-dimensional geometry.
One particularly interesting class of backgrounds, due to the crucial
rôle they play in the gauge/gravity correspondence, comprises those
backgrounds preserving a substantial amount of the supersymmetry of
the theory.  In fact, the fraction of supersymmetry preserved by a
background has proved to be a very useful organising principle in our
efforts to tame the zoo of supergravity backgrounds, despite being a
rather coarse invariant.  Finer invariants, such as the holonomy group
of the gravitino connection, are much harder to calculate and hence
have yet to play a decisive role in the various classification efforts
underway.

One particularly attractive classification problem is that of
backgrounds which preserve a large fraction of the supersymmetry.  In
higher dimensions it is possible to make some limited progress by
working one's way down from the top; i.e., classifying maximally or
near-maximally supersymmetric backgrounds, but in order to make real
progress in that classification, new ideas seem to be required.  Based
on the increasing number of known backgrounds, Patrick Meessen (in a
private communication to the senior author in 2004) observed that
$>{}\half$-BPS backgrounds---i.e., those preserving more than half of
the supersymmetry---were homogeneous; that is, that the Lie group of
flux-preserving isometries of such a background acts transitively on
the underlying manifold.  He conjectured that this was always the case
and after some initial partial results \cite{FMPHom,EHJGMHom}, a local
version of the conjecture was recently demonstrated for ten- and
eleven-dimensional supergravity theories in
\cite{FigueroaO'Farrill:2012fp}.  In principle this ``reduces'' the
classification problem of $>{}\half$-BPS backgrounds to those
backgrounds which are homogeneous.  Alas, this is still a daunting
task; although progress can be made.

The purpose of this paper is to extend the homogeneity theorem in
\cite{FigueroaO'Farrill:2012fp} to two six-dimensional supergravity
theories: $(1,0)$ and $(2,0)$.  Other possible supergravity theories
in six dimensions are either not yet constructed or can be obtained by
dimensional reduction (without truncation) from higher-dimensional
theories, in which case the homogeneity theorem follows by general
arguments, as will be explained in a forthcoming paper.

For the $(1,0)$ theory, it was shown in \cite{Gutowski:2003rg} that
backgrounds preserve either all, half or none of the supersymmetry.
Therefore if a background preserves more than half of the
supersymmetry, it must be maximally supersymmetric and they are known
to be homogeneous.  Indeed, as shown in \cite{CFOSchiral}, such
backgrounds are lorentzian Lie groups with bi-invariant metrics and
invariant 3-form.  In this paper we give a different proof of this
result which has the virtue of not requiring the classification (hence
we prove a ``Theorem'' instead of a ``theorem'', in the
nomenclature of Victor Kac) and which results in addition in the
construction of the Killing superalgebra of the background.  We are
not aware of similar results for the $(2,0)$ theory beyond the
classification of maximally supersymmetric backgrounds in
\cite{CFOSchiral}.

The proof of the homogeneity theorem in
\cite{FigueroaO'Farrill:2012fp} consists of two steps. The first step
is to show that the natural squaring map from spinor fields to vector
fields, when applied to the Killing spinors of the background, yields
Killing vectors which also preserve the fluxes.  With a little extra
effort, and because it is an interesting result in its own right, one
also shows that the Killing spinors generate a Lie superalgebra,
called the Killing superalgebra of the background---a more refined
invariant than the fraction of supersymmetry, which only measures the
dimension of the odd subspace.

The second step is purely algebraic and consists in proving the
surjectivity of the squaring map restricted to any subspace of
dimension greater than one half the rank of the relevant spinor
bundle.  This uses the fact that the vector obtained by squaring a
spinor is causal.  Applied to the space of Killing spinors of
$>\half$-BPS backgrounds, it guarantees that the tangent space at
every point can be spanned by vector fields in the image of the
squaring map, which by the first step are known to be infinitesimal
symmetries of the background.  This proves the \emph{local}
homogeneity of the background.

The paper is organised as follows.  In
Section~\ref{sec:six-dimensional-1} we treat the $(1,0)$ theory, with
eight real supercharges.  We introduce the relevant notion of Killing
spinor and show that they generate a Lie superalgebra.  We then show
that if the space of Killing spinors has dimension greater than four,
then the background is locally homogeneous.  In
Section~\ref{sec:six-dimensional-2} we do the same for the $(2,0)$
theory, with sixteen real supercharges. The calculations here are more
complicated due to the presence of R-symmetry generators in the
definition of Killing spinors.  Again we find that the Killing spinors
generate a Lie superalgebra and when its odd subspace has dimension
greater than eight, the background is locally homogeneous.  Finally,
in Appendix~\ref{sec:summ-spin-results}, we collect the basic facts
about spinors in six dimensions which are used in the bulk of the
paper: pinor, spinor and R-symmetry representations, the invariant
inner products, explicit matrix realisations and some useful
consequences of the Clifford relations, including the relevant Fierz
identities.

\section{Six-dimensional (1,0) supergravity}
\label{sec:six-dimensional-1}

Let $(M,g,H)$ be a bosonic background of six-dimensional $(1,0)$
supergravity \cite{NishinoSezgin10}.  This means that $(M,g)$ is a
connected six-dimensional lorentzian spin manifold and $H \in
\Omega_-^3(M)$ a closed anti-selfdual three-form and they satisfy the
field equations of the theory with fermions equal to zero.  We shall
not need the equations in the following.

Let $S_+$ denote the positive-chirality spinor representation of
$\Spin(5,1)$.  It is a two-dimensional quaternionic representation,
but we prefer to work with complex representations, whence we will
think of $S_+$ as a four-dimensional complex representation with an
invariant quaternionic structure; that is, with a complex antilinear
map $J: S_+ \to S_+$ which obeys $J^2 = - 1$.

Similarly, the fundamental representation $S_1$ of the R-symmetry
group of $(1,0)$ supergravity, which is isomorphic to
$\Sp(1)$, is a two-dimensional complex representation with an
invariant quaternionic structure $j: S_1 \to S_1$.

The tensor product $S_+ \otimes_\CC S_1$ of these two representations
is an 8-dimensional complex representation with an invariant
conjugation given by $J \otimes j$, whence it is a complex
representation of real type.  In other words, it is the
complexification of a real representation $\eS_+$, defined by
\begin{equation}
  S_+ \otimes_\CC S_1 \cong \eS_+ \otimes_\RR \CC~.
\end{equation}
The real representation $\eS_+$, which is the real subspace of $S_+
\otimes_\CC S_1$ fixed under the conjugation, is eight-dimensional and
is the relevant spinorial representation for this supergravity theory.
With some abuse of language we will also denote by $\eS_+$ the spinor
bundle on $M$ associated to this representation.

Let $\eS_-$ be the real eight-dimensional representation defined
as $\eS_+$ but starting from the negative-chirality spinor
representation $S_-$ of $\Spin(5,1)$.  As shown in
Section~\ref{sec:underly-real-spin} in the Appendix, there is a
$(\Spin(5,1)\times \Sp(1))$-invariant symplectic structure on $\eS =
\eS_+ \oplus \eS_-$, denoted by $\left<-,-\right>$, and satisfying
\begin{equation}
  \left<\Gamma_a \varepsilon_1, \varepsilon_2\right> = -
  \left<\varepsilon_1, \Gamma_a \varepsilon_2\right>~.
\end{equation}

\subsection{The Killing superalgebra}
\label{sec:killing-superalgebra}

The supersymmetry variation of the gravitino defines a connection $D$
on the bundle $\eS_+$, defined for a spinor field $\varepsilon \in
C^\infty(M;\eS_+)$ and a vector field $X \in C^\infty(M;TM)$ by
\begin{equation}
  D_X \varepsilon = \nabla_X \varepsilon + \tfrac14 \iota_X H
  \cdot \varepsilon~,
\end{equation}
where $\nabla$ is the spin connection on $\eS_+$ induced by the
Levi-Civita connection, $\cdot$ is the Clifford action and $\iota_X$
denotes the contraction by the vector field $X$.  Spinor fields which
are parallel with respect to $D$ are called \emph{Killing spinors}.
They form a real vector space $\fg_1$ whose dimension is at most the
rank of $\eS_+$, since a parallel section of a bundle over a connected
manifold is uniquely determined by its value at any one point.  For
the theory in question, $\dim\fg_1 \leq 8$.  Once fixing a point $p
\in M$, we will freely identify $\fg_1$ with a subspace of the fibre
of $\eS_+$ at $p$ which we will in turn identify with the
representation $\eS_+$ itself.  In other words, we will often think of
$\fg_1$ as a subspace of $\eS_+$.

Let $\fg_0$ denote the Lie algebra of Killing vector fields on $(M,g)$
which preserve $H$.  We will show that on $\fg = \fg_0 \oplus \fg_1$
we can define the structure of a Lie superalgebra.  This is by now a
standard construction for supergravity theories
\cite{JMFKilling,FMPHom,EHJGMHom}.

The Lie superalgebra structure on $\fg = \fg_0 \oplus \fg_1$ is a
graded skew bilinear map $\fg \times \fg \to \fg$, which unpacks into
three bilinear maps:
\begin{enumerate}
\item a skewsymmetric bilinear map $[-,-]: \fg_0 \times \fg_0 \to
  \fg_0$, which is simply the Lie bracket of vector fields;
\item the action of $\fg_0$ on $\fg_1$, which is a bilinear map $[-,-]:\fg_0
  \times \fg_1 \to \fg_1$, defined by the spinorial Lie derivative
  (see, e.g., \cite{Kosmann})
  \begin{equation}
    [X,\varepsilon] := \rL_X \varepsilon = \nabla_X \varepsilon - \rho(\nabla
    X) \varepsilon
  \end{equation}
  where $\nabla X$ is the skewsymmetric endomorphism of $TM$ defined
  by $Y \mapsto \nabla_Y X$ and $\rho: \fso(TM) \to \End(\eS_+)$ is
  the spin representation; and
\item a symmetric bilinear map $[-,-]: \fg_1 \times fg_1 \to \fg_0$,
  whose restriction to the diagonal is called the \emph{squaring map}:
  it is essentially the transpose of the Clifford action of vectors on
  spinors and will be defined presently.
\end{enumerate}
These maps are then subject to the Jacobi identity, which unpacks into
four components.  As we will review presently, three of these
components are automatically zero, but the fourth needs proof, which
we provide.

The transpose of the Clifford action of $TM$ on $\eS$ under the metric
$g$ on $TM$ and the symplectic structure on $\eS$ defines a symmetric
bilinear map $\eS_+ \times \eS_+ \to TM$.  Being symmetric it is
uniquely determined by its restriction to the diagonal, a quadratic
map $\eS_+ \to TM$, sending a spinor field $\varepsilon$ to its
\emph{Dirac current} $V_\varepsilon$, defined by
\begin{equation}
  g(V_\varepsilon, X) = \left<\varepsilon,X\cdot \varepsilon\right>~.
\end{equation}
By the usual polarisation identity, we define the vector field
$[\varepsilon_1,\varepsilon_2]$ corresponding to two spinor fields
$\varepsilon_1,\varepsilon_2 \in C^\infty(M;\eS_+)$ by
\begin{equation}
  \label{eq:polarisation}
  2 [\varepsilon_1,\varepsilon_2] = V_{\varepsilon_1 + \varepsilon_2}
  - V_{\varepsilon_1} - V_{\varepsilon_2}~.
\end{equation}

The first result is that when $\varepsilon_1,\varepsilon_2$ are
Killing spinors, $[\varepsilon_1,\varepsilon_2]$ is a Killing vector
which preserves the 3-form $H$.  Clearly, it is enough to show this
for the Dirac current of a Killing spinor $\varepsilon$.  To see this,
we first observe that a Killing spinor $\varepsilon$ is parallel
relative to a connection $D$ defined by
\begin{equation}
  D_\mu = \nabla_\mu - \tfrac12 \rho(H_\mu)~,
\end{equation}
where $\rho$ is the spin representation applied to the skewsymmetric
endomorphism $H_\mu$ of $TM$ defined, relative to a pseudo-orthonormal
basis $e_a$, by
\begin{equation}
  H_\mu (e_a)  = H_\mu{}^b{}_a e_b~.
\end{equation}
In other words, $D$ is the spin connection corresponding to an affine
connection also denoted $D$ and defined by $D_\mu = \nabla_\mu -
\tfrac12 H_\mu$.  By covariance, if $D_\mu \varepsilon = 0$, then
$D_\mu V_\varepsilon = 0$ as well.  In other words, writing $V$ for
$V_\varepsilon$,
\begin{equation}
  D_\mu V_\nu = \nabla_\mu V_\nu - \tfrac12 H_{\mu\nu\rho} V^\rho =
  0~,
\end{equation}
whence $\nabla_\mu V_\nu = \tfrac12 V^\rho H_{\rho\mu\nu}$.  First of
all, we see that $\nabla_\mu V_\nu = - \nabla_\nu V_\mu$, whence $V$
is a Killing vector field.  We also see that
\begin{equation}
  d V^\flat = \tfrac12 \iota_V H
\end{equation}
whence $d\iota_V H = 0$.  Since $dH = 0$, this shows that $\rL_V H =
0$, whence $V$ preserves $H$.  Then after polarisation we obtain a
symmetric bilinear map $[-,-]: \fg_1  \times \fg_1 \to \fg_0$.

If $K \in \fg_0$ is any Killing vector field which preserves $H$, then
the Lie derivative $\rL_K$ leaves invariant the connection $D$:
\begin{equation}
  \rL_K D_X - D_X \rL_K = D_{[K,X]}~.
\end{equation}
In turn, this means that $\rL_K$ acting on spinors also leaves
invariant the spin connection $D$, whence it sends Killing spinors to
Killing spinors, defining a map $[-,-]= \fg_0 \times \fg_1 \to \fg_1$
by $[K,\varepsilon] = \rL_K \varepsilon$.

It now remains to prove the Jacobi identity for the bracket $[-,-]:
\fg \times \fg \to \fg$ just defined on $\fg = \fg_0 \oplus \fg_1$.
The only component of the Jacobi identity which needs to be checked is
the $(\fg_1,\fg_1,\fg_1)$-component.  This is given by a symmetric
trilinear map $\fg_1 \times \fg_1 \times \fg_1 \to \fg_1$ which again
is determined uniquely via polarisation by the restriction to the
diagonal: the map sending a Killing spinor $\varepsilon$ to the
Killing spinor $\rL_{V_\varepsilon} \varepsilon$.  We need to show
that this is zero for all $\varepsilon \in \fg_1$.

A quick calculation shows that
\begin{equation}
  \rL_{V_\varepsilon} \varepsilon = \iota_{V_\varepsilon}H \cdot
  \varepsilon~,
\end{equation}
whose vanishing, using the first equation~\eqref{eq:cliffordintext}, becomes
\begin{equation}
  V_\varepsilon \cdot H \cdot \varepsilon + H \cdot V_\varepsilon
  \cdot \varepsilon = 0~.
\end{equation}

From Lemma~\ref{lem:asdkillspos}, we know that $H\cdot \varepsilon =
0$, whence the first term vanishes.  The second term will also vanish
as a consequence of the following

\begin{proposition}
  \label{prop:Vkillsepsilon}
  Let $\varepsilon \in \eS_+$ and $V_\varepsilon$ its Dirac current.
  Then $V_\varepsilon \cdot \varepsilon = 0$.
\end{proposition}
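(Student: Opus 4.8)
The plan is to give a representation-theoretic proof that isolates the role of the $\Sp(1)$ R-symmetry factor. Complexifying, I work with $S_+\otimes_\CC S_1$ and write a spinor as $\varepsilon^{Ai}$, where $A$ is an $S_+$ index and $i\in\{1,2\}$ an $S_1$ index, the reality of $\eS_+$ being imposed by the conjugation $J\otimes j$. First I would record the symmetry type of the Dirac current: the skewness of $\langle-,-\rangle$ together with $\langle\Gamma_a\varepsilon_1,\varepsilon_2\rangle=-\langle\varepsilon_1,\Gamma_a\varepsilon_2\rangle$ forces the bilinear $\langle\varepsilon_1,\Gamma_a\varepsilon_2\rangle$ to be \emph{symmetric} in its two spinor slots. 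Hence $\varepsilon\mapsto V_\varepsilon$ factors through $\mathrm{Sym}^2(S_+\otimes S_1)$, and since the vector representation sits inside $\Lambda^2 S_+$ (which is $\cong\VV$ after complexification) while $\Lambda^2 S_1$ is the R-symmetry singlet, the current is $V_\varepsilon^{AB}\propto\omega_{ij}\,\varepsilon^{Ai}\varepsilon^{Bj}$, skew in the $S_+$ indices $A,B$ and an $\Sp(1)$-singlet. Here $\omega$ is the invariant symplectic form on $S_1$.

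Next I would use that Clifford multiplication by a vector on $S_+$ is, up to normalisation, the unique invariant projection $\VV\otimes S_+\to S_-$, which in this index notation is contraction with the invariant alternating symbol $\epsilon_{ABCD}$ on $S_+$. This turns $V_\varepsilon\cdot\varepsilon$ into a purely cubic expression in $\varepsilon$,
\[
(V_\varepsilon\cdot\varepsilon)_D{}^{k}\;\propto\;\epsilon_{ABCD}\,V_\varepsilon^{AB}\,\varepsilon^{Ck}\;=\;\epsilon_{ABCD}\,\omega_{ij}\,\varepsilon^{Ai}\varepsilon^{Bj}\varepsilon^{Ck}.
\]
The decisive observation is then immediate: $\epsilon_{ABCD}$ antisymmetrises the three indices $A,B,C$, and because the three spinor factors commute this forces the expression to be totally antisymmetric in the three R-symmetry indices $i,j,k$. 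As these run over only two values, any totally skew tensor in three such indices vanishes identically, so $V_\varepsilon\cdot\varepsilon=0$. (One sees here exactly why the $S_1$ factor is essential: without it the would-be current $\varepsilon^{[A}\varepsilon^{B]}$ already vanishes, whereas the antisymmetric $\omega_{ij}$ both makes $V_\varepsilon$ nonzero and, via this over-antisymmetrisation, kills $V_\varepsilon\cdot\varepsilon$.)

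The main obstacle, and the step that draws on the Appendix, is the precise identification of Clifford multiplication $\VV\otimes S_+\to S_-$ with the $\epsilon$-contraction, equivalently the completeness-type relation $(\Gamma^a)_{AE}(\Gamma_a)^{D}{}_{C}\propto\epsilon_{AEC}{}^{D}$ obtained by contracting two gamma matrices over the vector index; I expect this to be precisely the Clifford/Fierz identity recorded there. Equally one must keep track of the quaternionic structures $J$ and $j$ so that the complex computation descends to the real representation $\eS_+$. As a cross-check, the result is consistent with—but strictly stronger than—the nullness $g(V_\varepsilon,V_\varepsilon)=\langle\varepsilon,V_\varepsilon\cdot\varepsilon\rangle=0$ of the Dirac current. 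Should the conceptual route prove awkward, a brute-force alternative is available: since $V_\varepsilon\cdot\varepsilon\in\eS_-$ and the symplectic form restricts to a nondegenerate pairing between the two chiralities, it suffices to show $\langle\psi,V_\varepsilon\cdot\varepsilon\rangle=g^{ab}\langle\varepsilon,\Gamma_a\varepsilon\rangle\langle\psi,\Gamma_b\varepsilon\rangle=0$ for all $\psi\in\eS_+$, which one can verify directly by inserting the explicit matrix realisations of $\Gamma_a$, $J$ and $j$ from the Appendix.
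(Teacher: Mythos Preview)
Your argument is correct and is essentially the paper's own: both reduce the claim to showing that the cubic expression is totally antisymmetric in the three R-symmetry indices, which take only two values and hence force vanishing. The paper packages this as Lemma~\ref{lem:3form10}, obtaining the total antisymmetry from the Fierz identity~\eqref{eq:Fierz10ABC} together with the Clifford relations~\eqref{eq:cliffrels}, whereas you reach it via the $\epsilon_{ABCD}$ spinor-index formalism; as you correctly anticipate, your identification of Clifford multiplication with $\epsilon$-contraction is precisely the content of that Fierz identity, so the two routes coincide.
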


\begin{proof}
  We have
  \begin{equation}
   V_\varepsilon \cdot \varepsilon  = \left<\varepsilon,\Gamma^a
      \varepsilon\right> \Gamma_a \varepsilon = \epsilon_{AB}
    \left(\varepsilon^A,\Gamma^a\varepsilon^B\right)\Gamma_a
    \varepsilon~,
  \end{equation}
  where we have expanded $\varepsilon$ in terms of a symplectic basis
  for the fundamental representation $S_1$ of the R-symmetry group
  $\USp(2)$, as described in Section~\ref{sec:1-0-fierz} of the
  appendix.  But then by Lemma~\ref{lem:3form10}, it vanishes.
\end{proof}

In summary, on $\fg = \fg_0 \oplus \fg_1$ we have the structure of a
Lie superalgebra, called the \emph{symmetry superalgebra} of the
supersymmetric $(1,0)$ background $(M,g,H)$.  The ideal $\fk =
[\fg_1,\fg_1] \oplus \fg_1$ generated by $\fg_1$ is called the
\emph{Killing superalgebra} of the background.

\subsection{Homogeneity}
\label{sec:homogeneity}

We will now prove the strong version of the (local) homogeneity
conjecture: that the even part of the Killing superalgebra already
acts locally transitively on the background.

It follows from Proposition~\ref{prop:Vkillsepsilon} that the Dirac
current $V_\varepsilon$ of a chiral spinor $\varepsilon \in \eS_+$ is
null:
\begin{equation}
  g(V_\varepsilon,V_\varepsilon) = \left<\varepsilon, V_\varepsilon
    \cdot \varepsilon\right> = 0~.
\end{equation}

The proof of homogeneity follows the same steps in
\cite{FigueroaO'Farrill:2012fp}, which we briefly review for the sake
of completeness.

Let $\dim\fg_1 > 4 = \tfrac12 \dim\eS_+$.  We want to show that for
each $p \in M$, the symmetric bilinear map $\varphi: \fg_1 \times
\fg_1 \to T_pM$, obtained by sending the pair
$(\varepsilon_1,\varepsilon_2)$ of Killing spinors to the tangent
vector $[\varepsilon_1,\varepsilon_2](p)$ to $M$ at $p$ is surjective.
Let $v \in T_pM$ be perpendicular to the image of $\varphi$; that is,
to $[\varepsilon_1,\varepsilon_2](p)$ for all
$\varepsilon_{1,2}\in\fg_1$.  This means that for all
$\varepsilon_{1,2} \in \fg_1$,
\begin{equation}
  \left<\varepsilon_1, v \cdot \varepsilon_2\right>= 0~,
\end{equation}
or that Clifford product by $v$ maps $\fg_1$ to $\fg_1^\perp \subset
\eS_-$.  Since $\dim\fg_1 > \dim\fg_1^\perp$, it follows that the
Clifford product by $v$ has nontrivial kernel and hence that $v$ is
null, since by the Clifford relation $v^2 \cdot \varepsilon = -g(v,v)
\varepsilon$.  Every vector which is perpendicular to the image of
$\varphi$ is therefore null and hence $(\im\varphi)^\perp \subset
T_pM$ is an isotropic subspace.  Since the isotropic subspaces of
$T_pM$ are at most one-dimensional, we have two possibilities: either
$\varphi$ is surjective or else $(\im\varphi)^\perp$ is
one-dimensional and spanned by a null vector $n$, say.  In this latter
case, the Dirac current $V_\varepsilon$ of every Killing spinor
$\varepsilon \in \fg_1$ is a null vector perpendicular to $n$, whence
it has to be proportional to $n$, otherwise they would span a
two-dimensional isotropic subspace.  But then by polarisation, every
vector in the image of $\varphi$ would be proportional to $n$,
contradicting the fact that $\im\varphi$ has codimension one.

In summary, we have shown that the tangent space to $M$ at any point
$p$ is spanned by the values at $p$ of Killing vectors in
$[\fg_1,\fg_1]$.  This shows that the $(1,0)$ background $(M,g,H)$ is
locally homogeneous.

\section{Six-dimensional (2,0) supergravity}
\label{sec:six-dimensional-2}

A bosonic background $(M,g,H)$ of $(2,0)$ supergravity
\cite{Townsend20,Tanii20} consists of a connected, lorentzian, spin
6-dimensional manifold $(M,g)$ and a closed anti-selfdual $\tV$-valued
three-form $H \in \Omega^3_-(M;\tV)$, where $\tV$ is the real
5-dimensional orthogonal representation of the $\USp(4) \cong
\Spin(5)$ R-symmetry group of the theory.  We may choose an
orthonormal basis $e_i$ for $\tV$ and hence think of $H = H^i e_i$ as
five anti-selfdual three-forms $H^i \in \Omega_-^3(M)$.  As in the
$(1,0)$ theory, these fields are subject to the field equations of the
theory with fermions put to zero, but we shall not need their explicit
form in what follows.

As before, $S_\pm$ are the complex 8-dimensional irreducible spinor
representations of $\Spin(5,1)$ and now $S_2$ denotes the fundamental
representation of $\USp(4)$, which is complex and 4-dimensional.  Both
$S_\pm$ and $S_2$ have invariant quaternionic structures, whence their
tensor product is a complex representation of $\Spin(5,1) \times
\USp(4)$ of real type, whence the complexification of a
sixteen-dimensional real representation $\rS_\pm$.  We will let $\rS =
\rS_+ \oplus \rS_-$, on which we have an action of $\Cl(5,1) \otimes
\Cl(0,5)$ with generators $\Gamma_a$ for $\Cl(5,1)$ and $\gamma_i$ for
$\Cl(0,5)$.  As discussed in the Appendix, $\rS$ has a symplectic
inner product $\left<-,-\right>$ relative to which $\rS_\pm$ are
lagrangian subspaces and such that
\begin{equation}
  \left<\varepsilon_1, \Gamma_a \varepsilon_2\right> = -
  \left<\Gamma_a \varepsilon_1, \varepsilon_2\right>
  \qquad\text{and}\qquad
  \left<\varepsilon_1, \gamma_i \varepsilon_2\right> = +
  \left<\gamma_i \varepsilon_1, \varepsilon_2\right>~.
\end{equation}

\subsection{The Killing superalgebra}
\label{sec:killing-superalgebra-20}

A \emph{Killing spinor} of $(2,0)$ supergravity is a section
$\varepsilon$ of $\rS_+$ which is parallel relative to a connection
$\eD$ defined by
\begin{equation}
  \eD_\mu \varepsilon = \nabla_\mu \varepsilon + \tfrac18 H^i_{\mu ab}
  \Gamma^{ab}\gamma_i \varepsilon~.
\end{equation}

The \emph{Dirac current} $V_\varepsilon$ of a spinor $\varepsilon\in
C^\infty(M;\rS_+)$ is the vector field defined by
\begin{equation}
  g(V_\varepsilon,X) = \left<\varepsilon, X \cdot \varepsilon\right>~,
\end{equation}
for all vector fields $X$.  Its coefficients relative to an
orthonormal frame are then given by $V_\varepsilon^a =
\left<\varepsilon,\Gamma^a\varepsilon\right>$.

As before, the Dirac current of a Killing spinor is a Killing vector
which preserves $H$.  Indeed, let $\varepsilon$ be a Killing spinor
and let $V = V_\varepsilon$ denote its Dirac current.  Its covariant
derivative is given by
\begin{equation}
  \begin{split}
    \nabla_\mu V^\nu &= \nabla_\mu \left<\varepsilon, \Gamma^\nu
      \varepsilon\right>\\
    &= \left<\nabla_\mu \varepsilon, \Gamma^\nu \varepsilon\right> + \left<\varepsilon, \Gamma^\nu
      \nabla_\mu \varepsilon\right>\\
    &= -\tfrac18 H^i_{\mu\rho\sigma} \left(
      \left<\Gamma^{\rho\sigma}\gamma_i \varepsilon, \Gamma^\nu
        \varepsilon\right> + \left<\varepsilon,\Gamma^\nu
        \Gamma^{\rho\sigma} \gamma_i \right>\right)\\
    &= \tfrac18 H^i_{\mu\rho\sigma} \left<\gamma_i \varepsilon, [\Gamma^{\rho\sigma},\Gamma^\nu]
      \varepsilon\right>\\
    &= \tfrac12 H^i_\mu{}^\nu{}_\sigma \left<\gamma_i \varepsilon,
      \Gamma^\sigma \varepsilon\right>~.
  \end{split}
\end{equation}
This can be rewritten as
\begin{equation}
  \nabla_\mu V_\nu = \tfrac12 H^i_{\mu\nu\rho}
  \left<\varepsilon,\Gamma^\rho\gamma_i \varepsilon\right>~,
\end{equation}
which shows that $V$ is a Killing vector.

Let $\theta \in \Omega^1(M;\tV)$ be the $\tV$-valued one-form defined by
$\theta^i_\mu = \left<\varepsilon,\Gamma_\mu\gamma^i
  \varepsilon\right>$.  Its covariant derivative is given by
\begin{equation}
  \begin{split}
    \nabla_\mu \theta^i_\nu &=
    \left<\nabla_\mu\varepsilon,\Gamma_\nu\gamma^i \varepsilon\right>
    + \left<\varepsilon,\Gamma_\nu\gamma^i
      \nabla_\mu\varepsilon\right>\\
    &= -\tfrac18 H^j_{\mu\rho\sigma} \left(
      \left<\Gamma^{\rho\sigma}\gamma_j \varepsilon, \Gamma_\nu
        \gamma^i \varepsilon\right> + \left<\varepsilon, \Gamma_\nu
        \Gamma^{\rho\sigma} \gamma^i \gamma_j
        \varepsilon\right>\right)\\
    &= \tfrac18 H^j_{\mu\rho\sigma} \left(
      \left<\gamma^i \gamma_j \varepsilon, \Gamma^{\rho\sigma}\Gamma_\nu
        \varepsilon\right> - \left< \gamma_j \gamma^i \varepsilon, \Gamma_\nu
        \Gamma^{\rho\sigma} \varepsilon\right>\right)~.
  \end{split}
\end{equation}
Using the Clifford relations $\gamma_j \gamma^i = \delta_j^i +
\gamma_j{}^i$, we can rewrite this as
\begin{equation}
  \begin{split}
    \nabla_\mu \theta^i_\nu  &= \tfrac18 H^i_{\mu\rho\sigma}
    \left<\varepsilon, [\Gamma^{\rho\sigma},\Gamma_\nu]
      \varepsilon\right> - \tfrac18 H^j_{\mu\rho\sigma}
    \left<\gamma_j{}^i \varepsilon, (\Gamma^{\rho\sigma}\Gamma_\nu +
      \Gamma_\nu \Gamma^{\rho\sigma}) \varepsilon\right>\\
    &= \tfrac12 H^i_{\mu\nu\rho}
    \left<\varepsilon,\Gamma^\rho\varepsilon\right> + \tfrac18
    H^j_\mu{}^{\rho\sigma} \left<\gamma^i{}_j \varepsilon,
      (\Gamma_{\rho\sigma}\Gamma_\nu + \Gamma_\nu\Gamma_{\rho\sigma})
      \varepsilon \right>\\
    &= \tfrac12 H^i_{\mu\nu\rho}
    \left<\varepsilon,\Gamma^\rho\varepsilon\right> + \tfrac14
    H^j_\mu{}^{\rho\sigma} \left<\gamma^i{}_j \varepsilon, \Gamma_{\nu\rho\sigma}\varepsilon\right>~.
  \end{split}
\end{equation}

It follows from this that its exterior derivative $d\theta \in
\Omega^2(M,\tV)$, with components $(d\theta)^i_{\mu\nu} = \nabla_\mu
\theta^i_\nu - \nabla_\nu \theta^i_\mu$, is given by
\begin{equation}
 (d\theta)^i_{\mu\nu} = H^i_{\mu\nu\rho}
  \left<\varepsilon,\Gamma^\rho \varepsilon\right>+ \tfrac14 H^j_\mu{}^{\rho\sigma}
  \left<\gamma^i{}_j\varepsilon, \Gamma_{\nu\rho\sigma} \varepsilon
  \right> - \tfrac14 H^j_\nu{}^{\rho\sigma}
  \left<\gamma^i{}_j\varepsilon, \Gamma_{\mu\rho\sigma} \varepsilon
  \right>~.
\end{equation}
Notice that the last two terms can be written in terms of a Clifford
commutator, so that
\begin{equation}
  (d\theta)^i_{\mu\nu} = H^i_{\mu\nu\rho}
  \left<\varepsilon,\Gamma^\rho \varepsilon\right> + \tfrac1{24} H^j_{\rho\sigma\tau}
  \left<\gamma^i{}_j\varepsilon, [\Gamma_{\mu\nu},\Gamma^{\rho\sigma\tau}] \varepsilon
  \right>~.
\end{equation}
The second term in the RHS is seen to vanish, since $[\Gamma_{\mu\nu},
H^j] \varepsilon = 0$, because $H^j$ and hence also its infinitesimal
Lorentz transformation $[\Gamma_{\mu\nu}, H^j]$ are anti-selfdual and
hence annihilate $\varepsilon$ by Lemma~\ref{lem:asdkillspos} in the
appendix.  The remaining term in the RHS is precisely the contraction
of $H$ by $V$.  This shows that $\iota_V H = d\theta$ is closed and,
since so is $H$, that $\rL_V H = d\iota_V H + \iota_V dH = 0$, showing
that $V$ leaves $H$ invariant.

We therefore have all the ingredients for a Lie superalgebra on the
vector space $\fg = \fg_0 \oplus \fg_1$, where $\fg_0$ is the Lie
algebra of Killing vector fields which in addition preserve $H$ and
$\fg_1$ is the space of Killing spinors, which for the $(2,0)$ theory
has dimension at most $16$.  The bracket $[-,-]:\fg_0 \times \fg_1 \to
\fg_1$ is given by the spinorial Lie derivative $[K,\varepsilon] =
\rL_K \varepsilon$, which since $K \in \fg_0$ leaves $\eD$ invariant and
hence takes Killing spinors to Killing spinors.  The bracket $[-,-]:
\fg_1 \times \fg_1 \to \fg_0$ is obtained as before by polarising the
construction of the Dirac current, as in
equation~\eqref{eq:polarisation}.

Three of the four components of the Jacobi identity vanish by
construction, whence only the $(\fg_1,\fg_1,\fg_1)$ component needs to
be checked.  This is a symmetric trilinear map $\fg_1 \times \fg_1
\times \fg_1 \to \fg_1$, whence it vanishes if and only if it vanishes
when restricted to the diagonal, which is the map sending a Killing
spinor $\varepsilon$ to its Lie derivative along its Dirac current:
$\rL_{V_\varepsilon} \varepsilon$.

Letting $V = V_\varepsilon$, we have
\begin{equation}
  \begin{split}
    \rL_V \varepsilon &= \nabla_V\varepsilon - \rho(\nabla V)
    \varepsilon\\
    &= V^\mu \nabla_\mu \varepsilon - \tfrac14 \nabla_\mu V_\nu
    \Gamma^{\mu\nu} \varepsilon\\
    &= -\tfrac18 H^i_{\mu\nu\rho}\left( \Gamma^{\mu\nu} \theta_i^\rho +
      V^\mu \Gamma^{\nu\rho}\gamma_i \right) \varepsilon\\
    &= -\tfrac18 H^i_{\mu\nu\rho}\Gamma^{\mu\nu} \left( \theta_i^\rho +
      V^\rho \gamma_i \right) \varepsilon\\
    &= \tfrac1{48} H^i_{\mu\nu\rho}\left(\Gamma^{\mu\nu\rho}
      \Gamma_\sigma + \Gamma_\sigma \Gamma^{\mu\nu\rho}\right) \left(
      V^\sigma \gamma_i + \theta_i^\sigma\right) \varepsilon~.
  \end{split}
\end{equation}

We now use that $H^i$ Clifford annihilates $\varepsilon$
(Lemma~\ref{lem:asdkillspos} in the appendix) to arrive at
\begin{equation}
  \label{eq:111Jacobi20}
  \begin{split}
    \rL_V \varepsilon &= \tfrac1{48} H^i_{\mu\nu\rho} \Gamma^{\mu\nu\rho}
    \Gamma_\sigma \left( V^\sigma \gamma_i + \theta_i^\sigma\right)
    \varepsilon\\
    &= \tfrac1{48} H^i_{\mu\nu\rho}\Gamma^{\mu\nu\rho} \Gamma_\sigma \left(
      \left<\varepsilon,\Gamma^\sigma\varepsilon \right> \gamma_i +
      \left<\varepsilon,\Gamma^\sigma\gamma_i \varepsilon
      \right>\right) \varepsilon~.
  \end{split}
\end{equation}

This can be rewritten in a way that allows us to use the Fierz
identity~\eqref{eq:20Fierz}, namely
\begin{equation}
  \rL_V \varepsilon = \tfrac1{48} H^i_{\mu\nu\rho}\Gamma^{\mu\nu\rho} \Gamma_\sigma
  \left( \gamma_i (\varepsilon\otimes\varepsilon^\flat) +
    (\varepsilon\otimes\varepsilon^\flat) \gamma_i \right)
  \Gamma^\sigma\varepsilon~.
\end{equation}
Using that Fierz identity and also equation~\eqref{eq:cliffrels}, we
may rewrite this finally as
\begin{equation}
  \begin{split}
    \rL_V \varepsilon &= -\tfrac1{24}
    H^i_{\mu\nu\rho}\Gamma^{\mu\nu\rho} \Gamma_\sigma
    \left(\left<\varepsilon,\Gamma^\sigma\varepsilon\right> \gamma_i +
      \left<\varepsilon,\Gamma^\sigma\gamma_i\varepsilon\right>\right)
    \varepsilon\\
    &= -\tfrac1{24}
    H^i_{\mu\nu\rho}\Gamma^{\mu\nu\rho} \Gamma_\sigma
    \left(V^\sigma \gamma_i + \theta_i^\sigma\right) \varepsilon~.
  \end{split}
\end{equation}
Comparing with equation~\eqref{eq:111Jacobi20}, we see that it must
vanish.

This shows that the brackets thus defined on $\fg = \fg_0 \oplus
\fg_1$ turn it into a Lie superalgebra.  The ideal $\fk =
[\fg_1,\fg_1] \oplus \fg_1$ is the \emph{Killing superalgebra} of the
$(2,0)$ background $(M,g,H)$.

\subsection{Homogeneity}
\label{sec:homogeneity-20}

The proof of homogeneity follows the same steps as in the $(1,0)$
theory.  An essential ingredient is that the Dirac current of a spinor
is a causal vector; that is, either null or timelike.  In the $(1,0)$
theory we showed that the Dirac current is always null, but in the
$(2,0)$ theory this is not the case.  Nevertheless we can still show
that it is causal.  We have only managed to do this by an explicit
computation using the realisation in Section~\ref{sec:expl-matr-real}
in the appendix.

The Dirac current of $\varepsilon \in \rS_+$ has components
\begin{equation}
  K^a = \left<\varepsilon,\Gamma^a\varepsilon\right>~.
\end{equation}
For nonzero $\varepsilon \in \rS_+$, it follows that $\Gamma^a \varepsilon \in
\rS_-$ and hence we may express the inner product in either bilinear or
sesquilinear forms.  We choose the sesquilinear form and compute the
0th component of the Dirac current.  In the notation of
Section~\ref{sec:expl-matr-real}, we find
\begin{equation}
  \begin{split}
    K^0 &= \varepsilon^\dagger (B \otimes b) (\Gamma^0 \otimes \1_4) \varepsilon\\
    &= \varepsilon^\dagger (- \Gamma_{12345}\Gamma^0 \otimes \1_4)
    \varepsilon\\
    &= \varepsilon^\dagger (\Gamma_7 \otimes \1_4) \varepsilon\\
    &= \varepsilon^\dagger \varepsilon > 0~,
  \end{split}
\end{equation}
where we have used that $\Gamma_7 \varepsilon = \varepsilon$.  This
shows that $K^0$ never vanishes and thus $K$ cannot be spacelike,
otherwise we could Lorentz transform to a frame where $K^0=0$.

The proof now follows \emph{mutatis mutandis} the same steps as those
outlined in Section~\ref{sec:homogeneity} for the $(1,0)$ case and
will not be repeated here.  In summary, if the dimension of the space
of Killing spinors is greater than 8, then the $(2,0$) background
$(M,g,H)$ is locally homogeneous.

In summary, we have established the existence of Killing superalgebras
for supersymmetric backgrounds of six-dimensional $(1,0)$ and $(2,0)$
supergravities and used that to show that $>\half$-BPS backgrounds are
locally homogeneous.  Together with the results of
\cite{FigueroaO'Farrill:2012fp}, and using that the homogeneity
theorem survives dimensional reduction, this establishes the validity
of the homogeneity theorem for all (pure, Poincaré) supergravity
theories which have been constructed thus far.  Details will appear in
a forthcoming paper.

\section*{Acknowledgments}

This work was supported in part by the grant ST/J000329/1 ``Particle
Theory at the Tait Institute'' from the UK Science and Technology
Facilities Council.

\appendix

\section{Summary of spinorial results}
\label{sec:summ-spin-results}

\subsection{The Clifford module and its inner products}
\label{sec:clifford-module-its}

Our Clifford algebra conventions follow \cite{Harvey}.  We define
$\Cl(s,t)$ to be the Clifford algebra associated with the real vector
space $\RR^{s+t}$ with inner product given by the matrix
\begin{equation}
  \eta =  \begin{pmatrix} \1_s & 0 \\ 0 & -\1_t \end{pmatrix}~,
\end{equation}
where $\1_p$ is the $p\times p$ identity matrix.  This means that
$\Cl(s,t)$ is the associative unital algebra generated by $\Gamma_a$,
$a=1,\dots,s+t$, subject to the relations
\begin{equation}
  \Gamma_a \Gamma_b + \Gamma_b \Gamma_a = - 2 \eta_{ab} \1~.
\end{equation}
(Notice the sign!)  In this paper we are interested in $\Cl(5,1)$.

As a real associative algebra, $\Cl(5,1)$ is isomorphic to the algebra
$\HH(4)$ of $4\times 4$ quaternionic matrices.  This means that it has
a unique irreducible representation, $\fM$, which is quaternionic and of
dimension $4$.  We prefer, however, to work over the complex numbers,
so that we will represent $\Gamma_a$ as complex $8 \times 8$ matrices
leaving invariant a quaternionic structure.  The resulting
8-dimensional complex representation is the complex vector space $P$
obtained from the right quaternionic vector space $\fM$ with via
restriction of scalars to $\CC$.  We call $P$ the \emph{pinor
  representation} of $\Cl(5,1)$.

There are two natural involutions of $\Cl(5,1)$, each one realisable
as the adjoint relative to a quaternionic inner product on $\fM$.  The
two inner products are denoted $\left<-,-\right>_\pm$ and defined by
\begin{equation}
  \label{eq:spinor-ips}
  \left<\Gamma_a \varepsilon_1, \varepsilon_2\right>_\pm = \pm
  \left<\varepsilon_1, \Gamma_a \varepsilon_2\right>~,
\end{equation}
where $\left<-,-\right>_+$ is $\HH$-hermitian, and
$\left<-,-\right>_-$ is $\HH$-skewhermitian.  These quaternionic inner
products induce inner products on the pinor representation $P$.  This
is done by decomposing $\left<-,-\right>_\pm$, which are $\HH$-valued,
into $\CC$-valued inner products:
\begin{equation}
  \left<-,-\right>_+ = h_+(-,-) + j \omega_+(-,-)~,
\end{equation}
where $h_+$ is $\CC$-hermitian and $\omega_+$ is $\CC$-symplectic; and
\begin{equation}
  \left<-,-\right>_- = i h_-(-,-) + j  g_-(-,-)~,
\end{equation}
where $h_-$ is $\CC$-hermitian and $g_-$ is $\CC$-symmetric. In either
case, one determines the other: $\omega_+(\varepsilon_1,\varepsilon_2)
= h_+(\varepsilon_1 j, \varepsilon_2)$ and
$g_-(\varepsilon_1,\varepsilon_2) = ih_-(\varepsilon_1 j,
\varepsilon_2)$.

\subsection{The spinor representations}
\label{sec:spin-repr}

The spin group $\Spin(5,1) \cong \SL(2,\HH)$ is contained in
$\Cl(5,1)$ and hence the irreducible Clifford module $\fM$ decomposes
under $\Spin(5,1)$ into the direct sum of two irreducible spinor
modules $\fS_\pm$, labelled by their chirality, i.e., the eigenvalue
of the volume element $\Gamma_7 = \Gamma^{012345}$ in $\Cl(5,1)$,
which obeys $\Gamma_7^2 = \1$.  The volume element $\Gamma_7$ is not
in the centre of $\Cl(5,1)$, but it commutes with $\Spin(5,1)$, whence
its eigenspaces $\fS_\pm$ are preserved by $\Spin(5,1)$.  These are
the positive- and negative-chirality spinor representations of
$\Spin(5,1)$.  They are quaternionic and of dimension two, but we will
again restrict scalars to obtain four-dimensional complex
representations $S_\pm$ with an invariant quaternionic structure.
This means that under $\Spin(5,1)$, $P = S_+ \oplus S_-$.  There is no
$\Spin(5,1)$-invariant inner product on $S_\pm$ (or $\fS_\pm$), but of
course there is on their direct sum, relative to which $S_\pm$ are
isotropic subspaces.  This means that $S_- = S_+^*$.

\subsection{The R-symmetry representations}
\label{sec:r-symm-repr}

The R-symmetry group of the $d=6$ $(p,q)$ supersymmetry
algebra is $\USp(2p)\times \USp(2q)$, whence $\USp(2) \cong \Sp(1)$
for the $(1,0)$ theory and $\USp(4) \cong \Sp(2)$ for the $(2,0)$
theory.  The spinor parameters in the supergravity theory transform
according to the fundamental representations of these groups, which
are quaternionic representations $\fS_1 \cong \HH$ for the $(1,0)$
theory and $\fS_2 \cong \HH^2$ for the $(2,0)$ theory.  Restricting
scalars to $\CC$ we arrive at complex representations $S_1$, of
dimension two, and $S_2$ of dimension four, with invariant
quaternionic structures, respectively.

The representations $S_1$ and $S_2$ have $\CC$-hermitian inner
products invariant under $\USp(2)$ and $\USp(4)$, respectively.
However the gravitino connection in the $(2,0)$ theory uses explicitly
an equivariant bilinear map $\tV \times S_2 \to S_2$, where $\tV$ is the real
5-dimensional representation of $\USp(4) \cong \Spin(5)$.  There are
precisely two such maps, corresponding to the Clifford actions of
$\Cl(\tV)\cong \Cl(0,5)$ on either of its two irreducible Clifford
modules.  This means that $S_2$ is to be thought of not just as a
spinor representation of $\Spin(5)$, but actually as one of the two
pinor representations of $\Cl(0,5)$.

As a real associative algebra, $\Cl(0,5)$ is isomorphic to two copies
of the algebra $\HH(2)$ of $2\times 2$ quaternionic matrices.
Therefore it has two inequivalent irreducible representations, which
are quaternionic of dimension $2$ or, after restricting scalars,
complex of dimension $4$ with an invariant quaternionic structure.
Let us call these latter complex representations $S_2$ and $S'_2$.
The action of $\Cl(0,5)$ is via $4\times 4$ complex matrices
$\gamma_i$, satisfying $\gamma_i \gamma_j + \gamma_j \gamma_i =
2\delta_{ij} \1$.  The two representations are distinguished by the
action of the volume element $\gamma_6 = \gamma_{12345}$, which is
central in $\Cl(5)$, satisfies $\gamma_6^2 = 1$ and acts like the
identity on $S_2$.  The $\Spin(5)$-invariant inner product on $S_2$ is
such that
\begin{equation}
  \left<\gamma_i \varepsilon_1, \varepsilon_2\right>= +
  \left<\varepsilon_1, \gamma_i \varepsilon_2\right>~.
\end{equation}
Indeed, with the opposite sign the volume element would be
skewsymmetric making $S_2$ isotropic.

\subsection{The underlying real spinorial representations}
\label{sec:underly-real-spin}

In the six-dimensional supergravity theories, the spinor
parameters of the supersymmetry transformations take values in a real
representation whose complexification is the tensor product of the
chiral spinor representation of $\Spin(5,1)$ and the fundamental
representation of the R-symmetry group.  As discussed above, these
representations are complex of quaternionic type and hence their
tensor product (over $\CC$) is a complex representation of real type
and thus the complexification of a real representation.  In this section
of the appendix we provide the details.

For brevity, we will consider the more general case of a tensor
product $V \otimes_\CC W$ of two complex representations of
quaternionic type.  This means that $V$ and $W$ have invariant
quaternionic structures $J_V$ and $J_W$, respectively.  They are
complex \emph{anti}linear maps which square to $-\1$.  Their tensor
product $c = J_V \otimes J_W$ is a complex antilinear map squaring to
$\1$---i.e., a conjugation.  The eigenspace of $c$ with eigenvalue $1$
is a real subrepresentation $U$ of $V \otimes_\CC W$ and indeed $V
\otimes_\CC W = U \otimes_\RR \CC = U \oplus i U$.  The complex inner products
on $V$ and $W$ (induced from the quaternionic inner products on the
original quaternionic representations) determine real inner products
on $U$.

Let us apply this now to the cases of interest.  As we have seen, the
spinor representations $S_\pm$ are isotropic, so if we want an inner
product we must work with the pinor representation $P = S_+ \oplus
S_-$.  As we saw in Section~\ref{sec:clifford-module-its}, on $P$ we
have one of two possible pairs of inner products: one pair consisting
of a $\CC$-hermitian and a $\CC$-symplectic inner product, and another
pair consisting of a $\CC$-skewhermitian and a $\CC$-symmetric inner
product.  We will choose the latter, in order for the Lie bracket on
the Killing spinors to be symmetric and hence have a chance of
generating a Lie superalgebra with nontrivial odd subspace.  This
means that we choose the inner product $\left<-,-\right>_-$ on $P$.

On $P\otimes_\CC S_1$ we therefore have a $\CC$-symplectic structure,
consisting of the tensor product of the $\CC$-symmetric and
$\CC$-symplectic inner products on $P$ and $S_1$, respectively.  This
restricts to a real symplectic inner product on the underlying real
subrepresentation $\eS$ of $P \otimes_\CC S_1$.  We will denote it by
$\left<-,-\right>$ and simply notice that the subspaces $\eS_\pm$,
defined as the underlying real representations of $S_\pm \otimes_\CC
S_1$, are lagrangian subspaces.

For the $(2,0)$ theory, we again pick the $\CC$-symmetric inner
product on $P$ and the $\CC$-symplectic inner product on $S_2$, so
that on $P \otimes_\CC S_2$ we have a $\CC$-symplectic structure,
restricting to a real symplectic inner product on the underlying real
representation denoted $\rS$ of $P \otimes_\CC S_2$.  We will again
denote it by $\left<-,-\right>$ and again notice that $\rS = \rS_+
\oplus \rS_-$, where the lagrangian subspaces $\rS_\pm$ are now defined
as the underlying real representations of $S_\pm \otimes_\CC S_2$.

\subsection{Explicit matrix realisation}
\label{sec:expl-matr-real}

An essential ingredient in the proof of the homogeneity theorem is the
fact that the Dirac current of a spinor is a causal vector.  Whereas
for the $(1,0)$-theory, this fact admits a rather elegant proof, for
the $(2,0)$-theory we have only managed to show this by calculating
using an explicit matrix realisation.  For completeness, and because
it may be useful in the future, we record here the necessary
formulae.  We let $\1_n$ denote the $n\times n$ identity
matrix and $\sigma_i$ the (hermitian) Pauli spin matrices with
$\sigma_1\sigma_2 = i \sigma_3$, et cetera.  An explicit realisation
for the generators $\Gamma_a$ of $\Cl(5,1)$ is given by the following
matrices:
\begin{equation}
  \begin{aligned}[m]
    \Gamma_0 &= \1 \otimes \1 \otimes \sigma_3\\
    \Gamma_1 &= -i\1 \otimes \sigma_1 \otimes \sigma_1\\
    \Gamma_2 &= -i\1 \otimes \sigma_2 \otimes \sigma_1
  \end{aligned}
  \qquad\qquad
  \begin{aligned}[m]
    \Gamma_3 &= i\sigma_1 \otimes \sigma_3 \otimes \sigma_1\\
    \Gamma_4 &= i\sigma_3 \otimes \sigma_3 \otimes \sigma_1\\
    \Gamma_5 &= i\1 \otimes \1 \otimes \sigma_2~,
  \end{aligned}
\end{equation}
with $\1 = \1_2$.

The invariant quaternionic structure is given by the composition $J =
m_J \circ \chi$, where $\chi$ is complex conjugation and $m_J$ is a
matrix which obeys $\Gamma_a m_J = m_J \overline{\Gamma_a}$
(invariance) and in addition $m_J \overline{m_J} = -\1_8$.  Invariance
says that $m_J$ commutes with the real $\Gamma_a$, namely
$\Gamma_{0,2,5}$, and anticommutes with the imaginary $\Gamma_a$,
namely $\Gamma_{1,3,4}$.  Thus we can take $m_J = \Gamma_{025}$, which
is real and obeys $\Gamma_{025}^2 = -\1_8$.

The $\HH$-skewhermitian inner product $\left<-,-\right>_-$ decomposes
into $ih_- + jg_-$, where $ih_-$ is $\CC$-skewhermitian and $g_-$ is
$\CC$-symmetric.  In this explicit realisation, $ih_-$ is determined
by a matrix $B$ such that
\begin{equation}
  ih_-(\varepsilon_1,\varepsilon_2) = \varepsilon_1^\dagger B \varepsilon_2~,
\end{equation}
and the defining property~\eqref{eq:spinor-ips} becomes
\begin{equation}
  \Gamma_a^\dagger B = - B \Gamma_a~,
\end{equation}
which says that $B$ anticommutes with $\Gamma_0$ and commutes with the
rest.  In other words, $B$ must be proportional to $\Gamma_{12345}$,
which in this realisation is given by
\begin{equation}
  \Gamma_{12345} = i \sigma_2 \otimes \sigma_3 \otimes \sigma_2~,
\end{equation}
which is symmetric and imaginary, hence skewhermitian, as expected.
We define $B := - \Gamma_{12345}$, where the sign is for later
convenience.

The symmetric inner product $g_-$ is given by a matrix $C$ such that
\begin{equation}
  g_-(\varepsilon_1,\varepsilon_2) = \varepsilon_1^T C \varepsilon_2~,
\end{equation}
where now
\begin{equation}
  \Gamma_a^T C = - C \Gamma_a~,
\end{equation}
which says that $C$ commutes with the skewsymmetric $\Gamma_a$, namely
$\Gamma_{2,5}$, and anticommutes with $\Gamma_{0,1,3,4}$.  This means
that $C$ must be proportional to $\Gamma_{0134}$, which in this
realisation is given by
\begin{equation}
  \Gamma_{0134} = i \sigma_2 \otimes \sigma_1 \otimes \sigma_2~,
\end{equation}
which is imaginary and symmetric.  We will define $C :=
\Gamma_{0134}$.

For the $(2,0)$-theory we will also need an explicit realisation of
$\Cl(0,5)$, conveniently given by the following $4\times 4$ matrices
\begin{equation}
  \gamma_1 = \sigma_1 \otimes \sigma_2 \quad 
  \gamma_2 = \sigma_2 \otimes \sigma_2 \quad
  \gamma_3 = -\sigma_3 \otimes \sigma_2 \quad
  \gamma_4 = \1 \otimes \sigma_3 \quad
  \gamma_5 = \1 \otimes \sigma_1~,
\end{equation}
with $\1 = \1_2$ again.

The invariant quaternionic structure $j$ is given by the composition
$m_j \circ \chi$, with $\chi$ again complex conjugation and $m_j$ a
matrix satisfying $m_j \overline{m_j} = -\1_4$ and $m_j \gamma_i =
\overline{\gamma_i} m_j$.  We can therefore take $m_j =
\gamma_{245}$.

The $\HH$-hermitian invariant inner product $\left<-,-\right>_+$
decomposes into $h_+ + j \omega_+$, where $h_+$ is $\CC$-hermitian and
$\omega_+$ is $\CC$-symplectic.  In this realisation, $h_+$ is
defined in terms of a matrix $b$ by
\begin{equation}
  h_+(\varepsilon_1,\varepsilon_2) = \varepsilon_1^\dagger b
  \varepsilon_2~,
\end{equation}
where $\gamma_i^\dagger b = b \gamma_i$.  Since all $\gamma_i^+ =
\gamma_i$ for all $i$, we can choose $b = \1_4$ without loss of
generality.  The $\CC$-symplectic inner product $\omega_+$ is given in
terms of a matrix $c$ by
\begin{equation}
  \omega_+(\varepsilon_1,\varepsilon_2) = \varepsilon_1^T c
  \varepsilon_2~,
\end{equation}
where $\gamma_i^T c = c \gamma_i$.  Thus $c$ must commute with
$\gamma_{2,4,5}$ and anticommute with $\gamma_{1,3}$, whence we can
take $c = \gamma_{245}$ which is real and symplectic.

In the tensor product representation $S_+ \otimes S_2$, the
conjugation $\rC = J \otimes j$ is given explicitly by
\begin{equation}
  \rC = \left(\Gamma_{025} \otimes \gamma_{245}\right) \circ \chi~,
\end{equation}
so that an element $\varepsilon$ of $\rS$ obeys
\begin{equation}
  \overline{\varepsilon} = \left(\Gamma_{025} \otimes
    \gamma_{245}\right) \varepsilon~.
\end{equation}

Let $\varepsilon_{1,2} \in \rS$.  Then it is an easy calculation to
show that the sesquilinear and bilinear inner products agree, as
expected.  This is nothing but the fact that for a Majorana spinor,
the Dirac conjugate agrees with the Majorana conjugate; explicitly,
\begin{equation}
  \varepsilon_1^\dagger (B \otimes b) \varepsilon_2 = \varepsilon_1^T
  (\Gamma_{025} \otimes \gamma_{245})^T (B\otimes b) \varepsilon_2 = \varepsilon_1^T
  (C\otimes c) \varepsilon_2~.
\end{equation}

\subsection{Vectors, forms and their Clifford action}
\label{sec:vectors-forms-their}

As a vector space, the Clifford algebra is isomorphic (as a
$\ZZ_2$-graded vector space) to the exterior
algebra.  When we globalise, the Clifford bundle $\Cl(TM)$ is
isomorphic as a $\ZZ_2$-graded vector bundle to the bundle of differential forms
$\Omega^*(M)$.  This means that differential forms can act on
spinors.  If $\theta \in \Omega^k(M)$ is a differential form of rank
$k$ and $\varepsilon \in C^\infty(M;\eS)$ is a spinor field, then we
will denote $\theta \cdot \varepsilon$ the spinor field obtained by
Clifford acting with $\theta$ on $\varepsilon$.  Explicitly,
\begin{equation}
  \theta \cdot \varepsilon = \tfrac1{k!} \theta_{a_1\dots a_k}
  \Gamma^{a_1\dots a_k} \varepsilon~.
\end{equation}
Similarly, if $X \in C^\infty(M;TM)$ is a vector field, we can define
its Clifford action $X \cdot \varepsilon$ on a spinor field as the
Clifford action of its dual 1-form $X^\flat$.

Let $\nu \in \Omega^6(M)$ denote the volume form.  Its Clifford action
is via $\Gamma_7 = \Gamma^{012345}$.  Then if $\theta \in
\Omega^k(M)$ and $\varepsilon$ is any spinor field,
\begin{equation}
  \Gamma_7 \theta \cdot \varepsilon = -(\star \theta) \cdot \varepsilon~,
\end{equation}
where $\star \theta$ is the Hodge dual.  A very useful consequence of
this calculation is the following.

\begin{lemma}
  \label{lem:asdkillspos}
  Let $H \in \Omega^3_-(M)$ be an anti-selfdual 3-form and let
  $\varepsilon \in C^\infty(M;\eS_+)$ be a positive-chirality spinor
  field.  Then $H \cdot \varepsilon = 0$.
\end{lemma}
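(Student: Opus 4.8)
The plan is to determine the chirality of the spinor field $H\cdot\varepsilon$ in two independent ways and to observe that the two answers are incompatible unless $H\cdot\varepsilon$ vanishes. Concretely, I would evaluate $\Gamma_7(H\cdot\varepsilon)$ twice. First I would use the Hodge--Clifford identity displayed immediately before the statement, $\Gamma_7\,\theta\cdot\varepsilon = -(\star\theta)\cdot\varepsilon$, applied to $\theta = H$. Since $H$ is anti-selfdual, $\star H = -H$, so this gives $\Gamma_7(H\cdot\varepsilon) = -(\star H)\cdot\varepsilon = H\cdot\varepsilon$; that is, $H\cdot\varepsilon$ lies in the positive-chirality eigenspace $\eS_+$.

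Second, I would use that the volume element $\Gamma_7 = \Gamma^{012345}$ anticommutes with every generator $\Gamma_a$, and hence with the odd-rank element $\Gamma^{abc}$ appearing in $H\cdot\varepsilon = \tfrac1{3!}H_{abc}\Gamma^{abc}\varepsilon$. Moving $\Gamma_7$ through the three factors produces a sign $(-1)^3 = -1$, so $\Gamma_7(H\cdot\varepsilon) = -H\cdot(\Gamma_7\varepsilon) = -H\cdot\varepsilon$, using $\Gamma_7\varepsilon = \varepsilon$ because $\varepsilon$ has positive chirality; that is, $H\cdot\varepsilon$ lies in $\eS_-$. Comparing the two expressions for $\Gamma_7(H\cdot\varepsilon)$ yields $H\cdot\varepsilon = -H\cdot\varepsilon$, whence $H\cdot\varepsilon = 0$.

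The computation is short, so I do not expect a serious obstacle; the only point requiring care is to confirm that the two evaluations are genuinely independent --- one rests on the global Hodge-duality relation together with the anti-selfduality of $H$, the other purely on the Clifford relation $\{\Gamma_7,\Gamma_a\}=0$ and the chirality of $\varepsilon$ --- and to track the signs correctly against the convention $\Gamma_a\Gamma_b+\Gamma_b\Gamma_a = -2\eta_{ab}\1$ and the sign in the Hodge identity. Conceptually the argument is that anti-selfduality forces $H\cdot\varepsilon$ to retain the chirality of $\varepsilon$, while its being a $3$-form (an odd Clifford element) forces it to flip chirality; no nonzero spinor can do both, so $H\cdot\varepsilon=0$.
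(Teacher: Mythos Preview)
Your proposal is correct and is essentially identical to the paper's own proof: both evaluate $\Gamma_7 (H\cdot\varepsilon)$ in two ways, once via the Hodge--Clifford identity $\Gamma_7\,\theta\cdot\varepsilon = -(\star\theta)\cdot\varepsilon$ together with $\star H = -H$, and once via the anticommutation $\Gamma_7 H = -H\Gamma_7$ together with $\Gamma_7\varepsilon = \varepsilon$, obtaining the contradictory values $+H\cdot\varepsilon$ and $-H\cdot\varepsilon$.
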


\begin{proof}
  Let $\Gamma_7$ denote the volume element in the Clifford algebra, so
  that the volume form acts via $\Gamma_7$.  Since $\varepsilon$ has
  positive chirality, $\Gamma_7 \varepsilon = \varepsilon$.  If $H \in
  \Omega^3(M)$ is any $3$-form, then $H \Gamma_7 = - \Gamma_7 H$,
  whence on the one hand
  \begin{equation*}
    \Gamma_7 H \cdot \varepsilon = - H \cdot \Gamma_7 \varepsilon = - H\cdot \varepsilon
  \end{equation*}
  and on the other hand, for $H$ anti-selfdual
  \begin{equation*}
   \Gamma_7 H \cdot \varepsilon = - (\star H) \cdot \varepsilon =  H
   \cdot \varepsilon~.
  \end{equation*}
\end{proof}

Also useful are the following identities, where $\theta \in
\Omega^k(M)$ and $X$ is any vector field:
\begin{equation}
  \label{eq:cliffordintext}
  \begin{aligned}
    X^\flat \cdot \theta - (-1)^k \theta \cdot X^\flat &= - 2 \iota_X \theta\\
    X^\flat \cdot \theta + (-1)^k \theta \cdot X^\flat &= + 2 X^\flat \wedge \theta ~.
  \end{aligned}
\end{equation}

Two more useful consequences of the Clifford relations are
\begin{equation}
  \label{eq:cliffrels}
  \Gamma^a \Gamma_b \Gamma_a = 4 \Gamma_b \qquad\text{and}\qquad
  \Gamma^a \Gamma_{bcd} \Gamma_a = 0~.
\end{equation}

\subsection{Fierz formulae}
\label{sec:fierz-formulae}

In this section we derive two important Fierz formulae.

\subsubsection{The (1,0) Fierz formula}
\label{sec:1-0-fierz}

Let us first of all consider the $(1,0)$ theory.  Let $\varepsilon \in
\eS_+$.  By choosing a complex basis $e_A$, $A=1,2$, for the
fundamental two-dimensional representation $S_1$ of $\USp(2)$ relative
to which the invariant complex symplectic form is given by the
Levi-Civita symbol $\epsilon_{AB}$, we may decompose $\varepsilon \in
\eS_+$ as $\varepsilon = \varepsilon^A e_A$, where each $\varepsilon^A
\in S_+$ is a chiral spinor of $\Spin(5,1)$.  In addition, the
$\varepsilon^A$ satisfy a reality condition whose explicit form we
will not need.  The real symplectic inner product on $\eS_+ \oplus
\eS_-$ is such that if $\varepsilon, \eta \in \eS_+\oplus \eS_-$, then
\begin{equation}
  \left<\varepsilon, \eta\right> = \epsilon_{AB}
  \left(\varepsilon^A,\eta^B\right)~,
\end{equation}
where $\left(-,-\right)$ is the symmetric inner product on $S_+ \oplus
S_-$, which we had denoted $g_-$ in
Section~\ref{sec:clifford-module-its}.

Now let $\psi_{1,2} \in S_+$ and consider the complex linear map
$\psi_1 \otimes \psi_2^\flat: S_- \to S_+$ defined by
\begin{equation}
  \left(\psi_1 \otimes \psi_2^\flat\right)(\psi_3) =
  \left(\psi_2,\psi_3\right) \psi_1~.
\end{equation}
This can be extended to an endomorphism of $P = S_+ \oplus S_-$ by
declaring it to be zero on $S_+$ and hence it defines an element of
the Clifford algebra $\Cl(5,1)$, which is the endomorphism algebra of
$P$.  Since the map reverses chirality, it lives in
$\Cl(5,1)^{\text{odd}}$, whence it is a linear combination of products
of an odd number of $\Gamma_a$ and since it annihilates $S_+$, it
takes the form
\begin{equation}
  \psi_1 \otimes \psi_2^\flat = \left(c^a \Gamma_a + \tfrac16 c^{abc}
    \Gamma_{abc}  \right) \Pi_-~,
\end{equation}
for some $c^a$ and $c^{abc}$ to be determined and where $\Pi_- =
\tfrac12 \left(\1 - \Gamma_7\right)$ is the projector onto
negative chirality spinors.

It is a simple matters of taking the trace of $\left( \psi_1 \otimes
  \psi_2^\flat \right)\Gamma_b$ and $\left( \psi_1 \otimes
  \psi_2^\flat \right)\Gamma_{abc}$ to determine that
\begin{equation}
  c^a = \tfrac14 \left(\psi_1, \Gamma^a \psi_2\right)
  \qquad\text{and}\qquad
  c^{abc} = \tfrac14 \left(\psi_1, \Gamma^{abc} \psi_2\right)~,
\end{equation}
whence we arrive at the Fierz identity
\begin{equation}
  \label{eq:Fierz10}
    \psi_1 \otimes \psi_2^\flat = \tfrac14
    \left(\psi_1,\Gamma^a\psi_2\right) \Gamma_a \Pi_- + \tfrac1{24}
    \left(\psi_1, \Gamma^{abc} \psi_2\right) \Gamma_{abc} \Pi_-~.
\end{equation}

If now $\varepsilon_{1,2} \in \eS_+$ and we apply the above Fierz
formula to the linear map $\varepsilon_1^A \otimes
(\varepsilon_2^B)^\flat: S_- \to S_+$, we arrive at
\begin{equation}
  \label{eq:Fierz10ABC}
    \varepsilon_1^A \otimes \left(\varepsilon_2^B\right)^\flat = \tfrac14
    \left(\varepsilon_1^A,\Gamma^a\varepsilon_2^B\right) \Gamma_a \Pi_- + \tfrac1{24}
    \left(\varepsilon_1^A, \Gamma^{abc} \varepsilon_2^B\right) \Gamma_{abc} \Pi_-~.
\end{equation}

A simple consequence of this Fierz identity is the following result.

\begin{lemma}
  \label{lem:3form10}
  Let $\varepsilon \in \eS_+$.  Then for all $A,B,C=1,2$,
  \begin{equation*}
   \left(\varepsilon^A, \Gamma^a \varepsilon^B\right) \Gamma_a
    \varepsilon^C = 0~.
  \end{equation*}
\end{lemma}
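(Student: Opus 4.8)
The plan is to deduce the vanishing directly from the Fierz identity~\eqref{eq:Fierz10ABC}, combined with the contraction identity $\Gamma^a\Gamma_{bcd}\Gamma_a = 0$ recorded in~\eqref{eq:cliffrels}. The first step is a symmetry observation that drastically simplifies matters. Since the symmetric inner product $(-,-) = g_-$ obeys $(\Gamma_a\psi_1,\psi_2) = -(\psi_1,\Gamma_a\psi_2)$, the current $(\varepsilon^A,\Gamma^a\varepsilon^B)$ is \emph{antisymmetric} under $A \leftrightarrow B$: indeed $(\varepsilon^B,\Gamma^a\varepsilon^A) = (\Gamma^a\varepsilon^A,\varepsilon^B) = -(\varepsilon^A,\Gamma^a\varepsilon^B)$. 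As $A,B$ run over $\{1,2\}$, this forces $(\varepsilon^A,\Gamma^a\varepsilon^B) = \epsilon^{AB}W^a$, where $W^a := (\varepsilon^1,\Gamma^a\varepsilon^2)$. Hence $(\varepsilon^A,\Gamma^a\varepsilon^B)\Gamma_a\varepsilon^C = \epsilon^{AB}\,W\cdot\varepsilon^C$, and it suffices to prove $W\cdot\varepsilon^C = 0$ for $C=1,2$, where $W\cdot{} := W^a\Gamma_a$.

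To prove this, I would fix $C$ and apply the Fierz identity~\eqref{eq:Fierz10} to the pair $\psi_1 = \psi_2 = \varepsilon^C$. By the antisymmetry just noted, the vector term $(\varepsilon^C,\Gamma^a\varepsilon^C)\Gamma_a$ drops out, leaving the purely cubic endomorphism identity
\begin{equation*}
  \varepsilon^C \otimes (\varepsilon^C)^\flat = \tfrac1{24}(\varepsilon^C,\Gamma^{abc}\varepsilon^C)\,\Gamma_{abc}\Pi_-~.
\end{equation*}
Let $C'$ denote the other index. Since $\Gamma_d\varepsilon^{C'} \in S_-$ we have $\Pi_-\Gamma_d\varepsilon^{C'} = \Gamma_d\varepsilon^{C'}$, so evaluating both sides on $\Gamma_d\varepsilon^{C'}$ yields
\begin{equation*}
  (\varepsilon^C,\Gamma_d\varepsilon^{C'})\,\varepsilon^C = \tfrac1{24}(\varepsilon^C,\Gamma^{abc}\varepsilon^C)\,\Gamma_{abc}\Gamma_d\varepsilon^{C'}~.
\end{equation*}
Contracting with $\Gamma^d$ on the left and summing over $d$, the right-hand side vanishes identically by $\Gamma^d\Gamma_{abc}\Gamma_d = 0$ from~\eqref{eq:cliffrels}, while the left-hand side equals $(\varepsilon^C,\Gamma_d\varepsilon^{C'})\Gamma^d\varepsilon^C = \pm\,W\cdot\varepsilon^C$. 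Therefore $W\cdot\varepsilon^C = 0$, which is exactly what was needed.

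The only delicate point will be the bookkeeping in this final contraction: the crux is recognising that the single surviving (3-form) Fierz term is \emph{annihilated} by the $\Gamma^d(\,\cdot\,)\Gamma_d$ sandwich, in contrast to the vector term which would have survived with $\Gamma^d\Gamma_a\Gamma_d = 4\Gamma_a$. Everything else—the use of the chirality projector $\Pi_-$ to pass between $S_+$ and $S_-$, and the symmetry of $g_-$—is routine. As an alternative bookkeeping, one could instead apply the same $\Gamma^d$-contraction to the Fierz identity with distinct arguments $\psi_1 = \varepsilon^A$, $\psi_2 = \varepsilon^B$ acting on $\Gamma_d\varepsilon^C$, obtaining the cyclic relation $(\varepsilon^A,\Gamma^a\varepsilon^B)\Gamma_a\varepsilon^C = (\varepsilon^B,\Gamma^a\varepsilon^C)\Gamma_a\varepsilon^A$; feeding the antisymmetry $(\varepsilon^A,\Gamma^a\varepsilon^B)=\epsilon^{AB}W^a$ into this cyclic identity forces $W\cdot\varepsilon^C = 0$ for each $C$ as well.
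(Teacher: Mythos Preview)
Your proof is correct and rests on exactly the same two ingredients as the paper's: the Fierz identity~\eqref{eq:Fierz10ABC} together with the contraction $\Gamma^a\Gamma_{bcd}\Gamma_a=0$, and the antisymmetry of $\Gamma_a$ under $(-,-)$. The paper packages these as ``$X^{ABC}$ is cyclic and skew in its first two slots, hence totally skew in three two-valued indices, hence zero''; your main argument reorganises this by first using the antisymmetry to reduce to $\epsilon^{AB}W\cdot\varepsilon^C$ and then specialising the Fierz identity to $\psi_1=\psi_2=\varepsilon^C$ so that the vector term drops out before the $\Gamma^d(\cdot)\Gamma_d$ contraction --- and the ``alternative bookkeeping'' you sketch at the end \emph{is} the paper's proof verbatim.
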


\begin{proof}
  An immediate consequence of the Fierz identity~\eqref{eq:Fierz10ABC}
  and equation~\eqref{eq:cliffrels} is that
  \begin{equation*}
    X^{ABC} := \left(\varepsilon^A, \Gamma^a \varepsilon^B\right) \Gamma_a
    \varepsilon^C
  \end{equation*}
  is invariant under cyclic permutations of its indices: $X^{ABC} =
  X^{BCA} = X^{CAB}$.  It also follows from the fact that $\Gamma^a$
  is skewsymmetric relative to the symmetric inner product
  $\left(-,-\right)$, that
  \begin{equation*}
    X^{ABC} = - X^{BAC}~.
  \end{equation*}
  In other words, $X^{ABC}$ is totally skewsymmetric, but since
  $A,B,C = 1,2$, it has to vanish.
\end{proof}

\subsubsection{The (2,0) Fierz formula}
\label{sec:2-0-fierz}

Every $\varepsilon \in \rS_+$ defines a linear map
$\varepsilon\otimes\varepsilon^\flat : \rS_- \to \rS_+$ by
\begin{equation}
  \left(\varepsilon\otimes\varepsilon^\flat\right)(\varepsilon') =
  \left<\varepsilon,\varepsilon'\right>\varepsilon~,
\end{equation}
with $\left<-,-\right>$ the symplectic inner product on $\rS = \rS_+
\oplus \rS_-$.  The linear map $\varepsilon\otimes\varepsilon^\flat$
extends to an endomorphism of $\rS$ which is trivial on $\rS_+$ and
hence can be expressed as an element of $\Cl(5,1) \otimes \Cl(0,5)$.
Symmetry and chirality imply that
\begin{equation}
  \varepsilon\otimes\varepsilon^\flat = c^a \Gamma_a \Pi_- + c^{a\,i}
  \Gamma_a \gamma_i \Pi_- + \tfrac1{12} c^{abc\,ij} \Gamma_{abc}\gamma_{ij}\Pi_-~,
\end{equation}
for some coefficients $c^a$, $c^{a\,i}$ and $c^{abc\,ij}$ which must
be determined.  Taking traces and remembering that $\gamma_i$ are
$4\times 4$ matrices, we find that
\begin{equation}
  \label{eq:20Fierz}
  \varepsilon\otimes\varepsilon^\flat = -\tfrac1{16}
  \left(\left<\varepsilon,\Gamma^a\varepsilon\right> \Gamma_a +
    \left<\varepsilon,\Gamma^a\gamma^i\varepsilon\right> \Gamma_a \gamma_i +
    \tfrac1{24} \left<\varepsilon, \Gamma^{abc}\gamma^{ij}\varepsilon\right>
    \Gamma_{abc}\gamma_{ij}\right)\Pi_-~.
\end{equation}

A consequence of this Fierz identity is that if $V^\mu =
\left<\varepsilon,\Gamma^\mu\varepsilon\right>$ and $\theta_\mu^i =
\left<\varepsilon,\Gamma_\mu\gamma^i\varepsilon\right>$, then
\begin{equation}
  5 V^\mu \Gamma_\mu \varepsilon + \theta_\mu^i \Gamma^\mu\gamma_i
  \varepsilon = 0~;
\end{equation}
although in contrast with the $(1,0)$ case, $V_\varepsilon$ does not
Clifford annihilate $\varepsilon$.  In particular, $V_\varepsilon$ is
not necessarily null, but only causal.

\bibliographystyle{utphys}
\bibliography{Geometry,Algebra,Sugra}

\end{document}